\DeclareMathOperator*{\argmax}{argmax}
\DeclareMathOperator*{\argmin}{argmin}
\DeclareMathOperator{\sgn}{sgn}
\newtheorem{thm}{Theorem}
\newtheorem{prop}{Proposition}
\newtheorem{lem}{Lemma}
\newcommand{\R}{\mathbb{R}}
\newcommand{\Se}{\mathcal{S}}
\newcommand{\gev}{\succcurlyeq}
\newcommand{\lev}{\preccurlyeq}
\newcommand{\grad}{\nabla}
\newcommand{\bone}{\hat\beta^{(1,k)}}
\newcommand{\btwo}{\hat\beta^{(2,k)}}
\begin{document}

\title[Natural coordinate descent for L1 regression]{Natural
  coordinate descent algorithm for L1-penalised regression in
  generalised linear models}

\author{Tom Michoel}

\address{The Roslin Institute, The University of Edinburgh, Easter
  Bush, Midlothian, EH25 9RG, Scotland, UK}

\email{tom.michoel@roslin.ed.ac.uk}

\thanks{Research supported by Roslin Institute Strategic Grant funding
  from the BBSRC}

\begin{abstract}
  The problem of finding the maximum likelihood estimates for the
  regression coefficients in generalised linear models with an
  $\ell_1$ sparsity penalty is shown to be equivalent to minimising
  the unpenalised maximum log-likelihood function over a box with
  boundary defined by the $\ell_1$-penalty parameter. In one-parameter
  models or when a single coefficient is estimated at a time, this
  result implies a generic soft-thresholding mechanism which leads to
  a novel coordinate descent algorithm for generalised linear models
  that is entirely described in terms of the natural formulation of
  the model and is guaranteed to converge to the true optimum. A
  prototype implementation for logistic regression tested on two
  large-scale cancer gene expression datasets shows that this
  algorithm is efficient, particularly so when a solution is computed
  at set values of the $\ell_1$-penalty parameter as opposed to along
  a regularisation path. Source code and test data are available from
  \url{http://tmichoel.github.io/glmnat/}.
\end{abstract}

\maketitle

\section{Introduction}
\label{sec:introduction}

In high-dimensional regression problems where the number of potential
model parameters greatly exceeds the number of training samples, the
use of an $\ell_1$ penalty which augments standard objective functions
with a term that sums the absolute effect sizes of all parameters in
the model has emerged as a hugely successful and intensively studied
variable selection technique, particularly for the ordinary least
squares (OLS) problem (e.g. \citep{tibshirani1996regression,
  osborne2000lasso, osborne2000new, efron2004least,
  zou2005regularization, johnstone2009statistical,
  friedman2010regularization, elghaoui2012,tibshirani2012strong,
  tibshirani2013lasso}).  Generalised linear models (GLMs) relax the
implicit OLS assumption that the response variable is normally
distributed and can be applied to, for instance, binomially
distributed binary outcome data or poisson distributed count data
\citep{nelder1972generalized}. However, the most popular and efficient
algorithm for $\ell_1$-penalised regression in GLMs uses a quadratic
approximation to the log-likelihood function to map the problem back
to an OLS problem and although it works well in practice, it is not
guaranteed to converge to the optimal solution
\citep{friedman2010regularization}.  Here it is shown that calculating
the maximum likelihood coefficient estimates for $\ell_1$-penalised
regression in generalised linear models can be done via a coordinate
descent algorithm consisting of successive soft-thresholding
operations on the \emph{unpenalised} maximum log-likelihood function
without requiring an intermediate OLS approximation. Because this
algorithm can be expressed entirely in terms of the natural
formulation of the GLM, it is proposed to call it the \emph{natural
  coordinate descent algorithm}.

To make these statements precise, let us start by introducing a
response variable $Y\in\R$ and predictor vector $X\in\R^p$. It is
assumed that $Y$ has a probability distribution from the exponential
family, written in canonical form as
\begin{equation*}
  p(y\mid\eta,\phi) = h(y,\phi) \exp\Bigl(\alpha(\phi)\bigl\{y\eta -
    A(\eta)\bigr\}\Bigr),
\end{equation*}
where $\eta\in\R$ is the natural parameter of the distribution, $\phi$
is a dispersion parameter and $h$, $\alpha>0$ and $A$ convex are known
functions.  The expectation value of $Y$ is a function of the natural
parameter, $E(Y)=A'(\eta)$, and linked to the predictor variables by
the assumption of a linear relation $\eta=X^T\beta$, where
$\beta\in\R^p$ is the vector of regression coefficients. It is tacitly
assumed that $X_1\equiv 1$ such that $\beta_1$ represents the
intercept parameter. Suppose now that we have $n$ observation pairs
$(x_i,y_i)$ (with $x_{i1}=1$ fixed for all $i$).  The minus
log-likelihood of the observations for a given set of regression
coefficients $\beta$ under the GLM is given by
\begin{equation}\label{eq:7}
  H(\beta) = \frac1n\sum_{i=1}^n  A(x_i^T\beta) - y_i(x_i^T\beta) \equiv
  U(\beta) - w^T\beta,
\end{equation}
where any terms not involving $\beta$ have been omitted, $U(\beta) =
\frac1n\sum_{i=1}^n A(x_i^T\beta)$ is a convex function,
$w=\frac1n\sum_{i=1}^n y_i x_i \in\R^p$, and the dependence of $U$ and
$w$ on the data $(x_i,y_i)$ has been suppressed for notational
simplicity. In the penalised regression setting, this cost function is
augmented with $\ell_1$ and $\ell_2$ penalty terms to achieve
regularity and sparsity of the minimum-energy solution, i.e. $H$ is
replaced by
\begin{equation}\label{eq:14}
  H(\beta) = U(\beta) - w^T\beta + \lambda \|\beta\|_2^2 + \mu \|\beta\|_1,
\end{equation}
where $\|\beta\|_2 = (\sum_{j=1}^p |\beta_j|^2)^{\frac12}$ and
$\|\beta\|_1 = \sum_{j=1}^p |\beta_j|$ are the $\ell_2$ and $\ell_1$
norm, respectively, and $\lambda$ and $\mu$ are positive
constants. The $\ell_2$ term merely adds a quadratic function to $U$
which serves to make its Hessian matrix non-singular and it will not
need to be treated explicitly in our analysis.  Furthermore a slight
generalisation is made where instead of a fixed parameter $\mu$, a
vector of predictor-specific penalty parameters $\mu_j$ is used. This
allows for instance to account for the usual situation where the
intercept coefficient is unpenalised ($\mu_1=0$).  The problem we are
interested in is thus to find
\begin{equation}\label{eq:3}
  \hat \beta = \argmin_{\beta\in\R^p} H(\beta),
\end{equation}
with $H$ a function of the form
\begin{equation}\label{eq:13}
  H(\beta) = U(\beta) - w^T\beta + \sum_{j=1}^p \mu_j |\beta_j|,
\end{equation}
where $U:\R^p\to\R$ is a smooth convex function, $w\in\R^p$ is an
arbitrary vector and $\mu\in\R^p$, $\mu\gev0$ is a vector of
non-negative parameters. The notation $u\gev v$ is used to indicate
that $u_j\geq v_j$ for all $j$ and likewise the notation $u\cdot v$
will be used to indicate elementwise multiplication, i.e. $(u\cdot
v)_j=u_jv_j$. The maximum of the \emph{unpenalised} log-likelihood,
considered as a function of $w$, is of course the Legendre transform
of the convex function $U$,
\begin{equation*}
  L(w) = \max_{\beta\in\R^p} \Bigl\{ w^T\beta -  U(\beta) \Bigr\},
\end{equation*}
and the unpenalised regression coefficients satisfy
\begin{align*}
  \hat\beta_0(w) = \argmax_{\beta\in\R^p} \Bigl\{ w^T\beta -  U(\beta)
  \Bigr\} = \grad L(w),
\end{align*}
where $\grad$ is the usual gradient operator (see Lemma
\ref{lem:legendre} in Appendix \ref{sec:proof-theorem-app}).  This
leads to the following key result:
\begin{thm}\label{thm:main}
  The solution $\hat\beta(w,\mu)$ of
  \begin{equation}\label{eq:1}
    \hat\beta(w,\mu) = \argmin_{\beta\in\R^p} \Bigl\{ U(\beta) -
    w^T\beta + \sum_{j=1}^p \mu_j|\beta_j| \Bigr\}
  \end{equation}
  is given by
  \begin{align*}
    \hat\beta(w,\mu) = \hat\beta_0\bigl(\hat u(w,\mu)\bigr) = \grad
    L\bigl(\hat u(w,\mu)\bigr),
  \end{align*}
  where $\hat u(w,\mu)$ is the solution of the constrained
  convex optimisation problem
  \begin{equation}\label{eq:2}
    \hat u(w,\mu) = \argmin_{\{u\in\R^p\colon  |u-w|\lev \mu\}} L(u).
  \end{equation}
  Furthermore the sparsity patterns of $\hat\beta$ and $\hat u - w
  +\sgn(\hat\beta)\cdot\mu$ are complementary,
  \begin{align*}
    \hat\beta_j(w,\mu) \neq 0 \Leftrightarrow \hat u_j(w,\mu) = w_j
    -\sgn(\hat\beta_j)\mu_j.
  \end{align*}
\end{thm}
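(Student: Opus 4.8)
The plan is to characterise the minimisers of (\ref{eq:1}) and (\ref{eq:2}) by their first-order optimality conditions — which, both problems being convex, are necessary and sufficient — and to show that one set of conditions is carried into the other by the conjugacy relation $\grad U = (\grad L)^{-1}$ furnished by Lemma \ref{lem:legendre}. Throughout I would use that the $\ell_2$ term absorbed into $U$ makes $U$ strictly convex with non-singular Hessian, so $\grad U$ is a bijection and $\grad L$ is single-valued; this is what lets me move freely between a coefficient $\hat\beta$ and the dual point $\hat u = \grad U(\hat\beta)$.

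First I would write the subdifferential stationarity condition for (\ref{eq:1}). Since the objective is convex, $\hat\beta$ is optimal if and only if $0 \in \grad U(\hat\beta) - w + \partial\bigl(\sum_j \mu_j|\beta_j|\bigr)(\hat\beta)$, i.e. for each coordinate
\begin{equation*}
  w_j - \grad_j U(\hat\beta) = \mu_j s_j, \qquad s_j \in \partial|\hat\beta_j|,
\end{equation*}
with $s_j = \sgn(\hat\beta_j)$ when $\hat\beta_j \neq 0$ and $s_j\in[-1,1]$ otherwise. Setting $\hat u := \grad U(\hat\beta)$, Lemma \ref{lem:legendre} gives the equivalent identity $\hat\beta = \grad L(\hat u)$, and $|s_j|\le 1$ yields $|\hat u - w| \lev \mu$, so $\hat u$ is feasible for (\ref{eq:2}).

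Next I would verify that this $\hat u$ actually minimises $L$ over the box. The optimality condition for (\ref{eq:2}) is that $-\grad L(\hat u)$ lie in the normal cone of $\{|u-w|\lev\mu\}$, which decouples across coordinates: $\grad_j L(\hat u) \ge 0$ if $\hat u_j = w_j-\mu_j$, $\grad_j L(\hat u)\le 0$ if $\hat u_j = w_j+\mu_j$, and $\grad_j L(\hat u)=0$ in the interior. Substituting $\grad_j L(\hat u) = \hat\beta_j$ and reading off the three cases from the previous display shows these are exactly met: $\hat\beta_j>0$ forces $\hat u_j = w_j-\mu_j$, $\hat\beta_j<0$ forces $\hat u_j = w_j+\mu_j$, and $\hat\beta_j=0$ is compatible with the interior or either boundary. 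Hence $\hat u$ solves (\ref{eq:2}), and since strict convexity of $U$ makes the box minimiser unique, $\hat\beta(w,\mu) = \grad L(\hat u(w,\mu))$ as claimed. (The value identity can be cross-checked in one stroke by writing $\mu_j|\beta_j| = \max_{|t_j|\le\mu_j} t_j\beta_j$ and $U = L^*$, turning (\ref{eq:1}) into a convex--concave saddle problem whose minimax interchange gives $-\min_{|u-w|\lev\mu} L(u)$.)

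For the complementary sparsity pattern the forward implication is immediate from the matched conditions: if $\hat\beta_j\neq 0$ then $s_j = \sgn(\hat\beta_j)=\pm1$ and the corresponding box constraint is active, so $\hat u_j = w_j - \sgn(\hat\beta_j)\mu_j$ and the $j$-th entry of $\hat u - w + \sgn(\hat\beta)\cdot\mu$ vanishes. The converse is where I expect the real obstacle to lie: when $\hat\beta_j = 0$ one only knows $\grad_j L(\hat u)=0$, which permits both an inactive constraint and the degenerate coincidence $\hat u_j = w_j$ — and the latter genuinely occurs, e.g. for $U(\beta)=\tfrac12\|\beta\|_2^2$ at $w=0$, where $\hat\beta=0$ and $\hat u=w$. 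I would therefore close this direction under strict complementarity (equivalently, state the correspondence away from this measure-zero degenerate set), showing that on the active set the constraint is tight while on the genuinely inactive set $\hat u_j$ is strictly interior, so that $\hat u_j - w_j + \sgn(\hat\beta_j)\mu_j = \hat u_j - w_j \neq 0$ precisely when $\hat\beta_j = 0$.
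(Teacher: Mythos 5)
Your proof is correct, but it runs in the opposite direction from the paper's. The paper applies Fenchel's duality theorem to $F(\beta)=U(\beta)-w^T\beta$ and $G(\beta)=\sum_j\mu_j|\beta_j|$, computes $F^*(u)=L(u+w)$ and $G^*$ as the indicator of the box, obtains the value identity $\min_\beta H(\beta)=-\min_{\{|u-w|\lev\mu\}}L(u)$, extracts the active-constraint/sign information from the variational inequality $(v-\hat u)^T\grad L(\hat u)\geq 0$, and only then verifies a posteriori that $\hat\beta_0=\grad L(\hat u)$ attains the optimal value of $H$ --- i.e.\ it goes dual $\to$ primal. You instead start from the subgradient (KKT) characterisation of the primal minimiser, set $\hat u=\grad U(\hat\beta)$, and check the normal-cone conditions of the box problem coordinate by coordinate --- primal $\to$ dual --- using only the conjugacy $\hat\beta=\grad L(\hat u)$ supplied by Lemma \ref{lem:legendre}; your parenthetical minimax identity $\mu_j|\beta_j|=\max_{|t_j|\leq\mu_j}t_j\beta_j$ is Fenchel duality in disguise and recovers the paper's eq.~\eqref{eq:6}. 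Your route is more elementary (first-order conditions are necessary and sufficient by convexity, no duality theorem needed), at the price of needing uniqueness of $\hat u$ to identify your constructed point with $\hat u(w,\mu)$ in \eqref{eq:2}; there your attribution is slightly off: uniqueness of the box minimiser follows from strict convexity of $L$, which is dual to \emph{smoothness} of $U$ (equivalently, from injectivity of $\grad L=(\grad U)^{-1}$, which your own preamble provides), whereas strict convexity of $U$ gives uniqueness of $\hat\beta$, not of $\hat u$.

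Your caveat about the converse of the sparsity claim is well taken, and is in fact a point on which you are more careful than the paper. The paper's proof establishes only the forward implication ($\partial L/\partial u_j(\hat u)\neq 0$ forces $\hat u_j=w_j-\sigma_j\mu_j$) and then asserts the equivalence; with the convention $\sgn(0)=0$ your example $U(\beta)=\tfrac12\|\beta\|_2^2$, $w=0$ gives $\hat\beta_j=0$ together with $\hat u_j=w_j$, so the displayed ``$\Leftrightarrow$'' fails in this degenerate case unless it is read as complementary slackness or restricted by strict complementarity, exactly as you propose. Note that the remainder of the paper's argument (the computation of $\hat u^T\hat\beta_0$ over $I=\{j\colon\hat\beta_{0,j}\neq0\}$) uses only the forward implication, so the main conclusion of the theorem is unaffected either way.
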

The proof of this Theorem consists of an application of Fenchel's
duality theorem and is provided in Appendix
\ref{sec:proof-theorem-app}. Two special cases of Theorem
\ref{thm:main} merit attention. Firstly, in the case of lasso or
elastic net penalised linear regression,
$U(\beta)=\frac12\beta^T C\beta$ is a quadratic function of $\beta$,
with $C\in\R^{p\times p}$ a positive definite matrix, such that
$L(w)=\frac12 w^TC^{-1}w$ and $\grad L(w)=C^{-1}w$.  If furthermore
$C$ is diagonal, with diagonal elements $c_j>0$, then eq. \eqref{eq:2}
reduces to the $p$ independent problems
\begin{equation*}
  \hat u_j=\argmin_{\{u\in\R\colon |u-w_j|\leq\mu_j\}} \tfrac{1}{c_j}u_j^2,
\end{equation*}
with solution
\begin{equation*}
  \hat u_j =
  \begin{cases}
    \sgn(w_j) (|w_j|-\mu_j) & \text{if } |w_j|>\mu_j\\
    0 & \text{otherwise}
  \end{cases}
\end{equation*}
and $\hat\beta_j(w,\mu)=\frac1{c_j}\hat u_j$. This is the well-known
analytic solution of the lasso with uncorrelated predictors
\citep{tibshirani1996regression}, which forms the basis for
numerically solving the case of arbitrary $C$ as well
\citep{friedman2010regularization}.  Secondly, in the case of
penalised covariance matrix estimation, $U(\Theta)=-\log\det \Theta$
for non-negative definite matrices $\Theta\in\R^{p\times p}$, and
$L(W)=\log\det(-W)^{-1}$ for $W\in\R^{p\times p}$ negative definite
(and $+\infty$ otherwise) \citep[\S
3.3.1]{boyd2004convex}. Eq. \eqref{eq:2} is then exactly the dual
problem studied by \citet{banerjee2008model}.

\section{Natural coordinate descent algorithm}
\label{sec:natur-coord-desc}

\subsection{Exact algorithm}
\label{sec:natur-coord-desc-1}

It is well-known that a cyclic coordinate descent algorithm for the
$\ell_1$-penalised optimisation problem in eq. \eqref{eq:1} converges
\citep{tseng2001convergence}.  When only one variable is optimised at
a time, keeping all others fixed, the equivalent variational problem
in eq. \eqref{eq:2} reduces to a remarkably simple soft-thresholding
mechanism illustrated in Figure \ref{fig:Thm-1D}. More precisely, let
$U(\beta)$ be a smooth convex function of a single variable
$\beta\in\R$, $w_0=\argmin_{u\in\R}L(u)$ and $\sigma=\sgn(w-w_0)$. The
solution of the one-variable optimisation problem
\begin{align*}
  \hat u = \argmin_{\{u\in\R\colon |u-w|\leq\mu\}} L(u),
\end{align*}
with $\mu\geq0$, can be expressed as follows. If $|w-w_{0}|\leq\mu$ then
$\hat u=w_{0}$ and hence $\hat \beta(w,\mu)=0$. Otherwise we must have
$\hat u=w-\sigma\mu$ and
$\hat\beta(w,\mu)=L'(w-\sigma\mu)=\hat\beta_{0}(w-\sigma\mu)$.
Hence the solution takes the form of a generalised `soft-thresholding'
\begin{equation*}
  \hat\beta(w,\mu) =
  \begin{cases}
    \hat\beta_{0}(w-\sigma\mu) & |w-w_{0}|>\mu\\
    0 & |w-w_{0}|\leq\mu
  \end{cases},
\end{equation*}
see also Figure \ref{fig:Thm-1D}. In other words, compared to the
multivariate problem in Theorem \ref{thm:main} where there remains
ambiguity about the signs $\sgn(\hat\beta_j)$, in the one-variable
case the sign is uniquely determined by the relative position of $w$
and $w_0$.

\begin{figure}
  \centering
  \includegraphics[width=\linewidth]{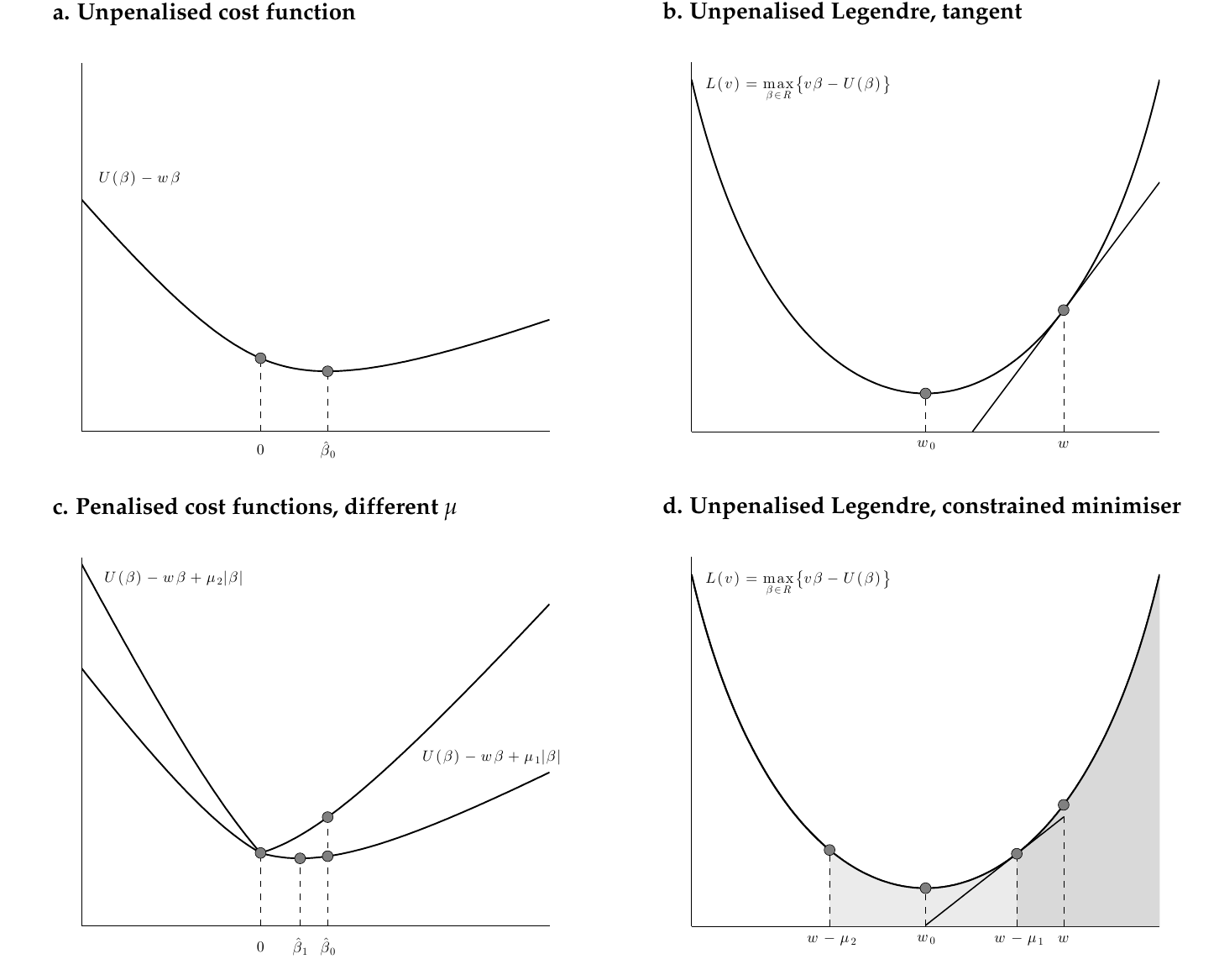}
  \caption{Illustration of Theorem \ref{thm:main} in one
    dimension. \textbf{a.} The unpenalised cost function
    $U(\beta)-w\beta$ is a convex function of $\beta$; the
    maximum-likelihood estimate $\hat\beta_0$ is its unique
    minimiser. \textbf{b.} The maximum-likelihood estimate is also
    equal to the slope of the tangent to the Legendre transform of $U$
    at $w$. \textbf{c.} Every value of the $\ell_1$ penalty parameter
    $\mu$ leads to a different cost function; for $\mu=\mu_1$
    sufficiently small, the maximum-likelihood estimate
    $\hat\beta_1<\hat\beta_0$ is non-zero while for sufficiently large
    $\mu=\mu_2$ it is exactly zero. \textbf{d.} The penalised problem
    can also be solved by minimising the \emph{unpenalised} Legendre
    transform over the interval $[w-\mu,w+\mu]$; for $w>w_0$ and
    $\mu_1<w-w_0$ the absolute minimiser of $L$ is not included in
    this interval such that the constrained minimiser is the boundary
    value $w-\mu_1$ and the the maximum-likelihood estimate
    $\hat\beta_1$ equals the slope of the tangent at $w-\mu_1$, while
    for $\mu_2\geq w-w_0>0$, the constrained minimiser is always the
    absolute minimiser which has a tangent with slope zero. Note that
    because $L$ is convex, the slope at $w-\mu_1$ is always smaller
    than the slope at $w$ (i.e. $\hat\beta_1<\hat\beta_0$). Similar
    reasoning applies when $w<w_0$.}
  \label{fig:Thm-1D}
\end{figure}

Numerically solving the unpenalised one-variable problem is usually
straightforward. First note that by assumption, $U$ is differentiable
and therefore it is itself the Legendre transform of $L$. Hence
\begin{equation*}
  w_{0} = \argmin_{u\in\R} L(u) = \Bigl.
  \argmax_{u\in\R}\bigl\{\beta u- L(u)\bigr\}\Bigr|_{\beta=0} =
  U'(0). 
\end{equation*}
Likewise, and assuming there exists no analytic expression for $L$,
$\hat\beta_{0}(w-\sigma\mu)$ can be found as the zero of the function
\begin{align*}
  f(\beta) = U'(\beta) - w + \sigma\mu.
\end{align*}
For $U$ convex, this is a monotonically increasing function of $\beta$
and conventional one-dimensional root-finding algorithms converge
quickly. 

The $p$-dimensional natural coordinate descent algorithm simply
consists of iteratively applying the above procedure to the
one-dimensional functions
\begin{equation*}
  U_j(\beta_j) = U(\hat\beta_1,\dots,
  \hat\beta_{j-1},\beta_j,\hat\beta_{j+1},\dots, \hat\beta_{p}),
\end{equation*}
where $\hat\beta\in\R^p$ are the current coefficient estimates, i.e.
\begin{equation}\label{eq:16}
  \hat\beta_j^{\text{(new)}} = 
  \begin{cases}
    \argmax_{\beta\in\R}\{(w_j-\sigma_j\mu_j)\beta - U_j(\beta)\}  &
    \text{ if } \bigl|w_j-w_{0,j}\bigr| > \mu_j\\
    0 & \text{ otherwise}
  \end{cases},
\end{equation}
where $w_{0,j} = U'_j(0)$ and $\sigma_j=\sgn(w_j-w_{0,j})$.

Standard techniques can be used to make the algorithm more efficient
by organising the calculations around the set of non-zero coefficients
\citep{friedman2010regularization}, that is, after every complete cycle
through all coordinates, the current set of non-zero coefficients is
updated until convergence before another complete cycle is run (see
pseudocode in  Appendix \ref{sec:code}).

\subsection{Quadratic approximation algorithm with exact thresholding}
\label{sec:quadr-appr-with}

An alternative method for updating $\hat\beta_j$ in the preceding
algorithm is to use a quadratic approximation to $U_j(\beta_j)$ around
the current estimate of $\hat\beta_j$ in eq. \eqref{eq:16}. This leads
to a linear approximation for $\hat\beta_j^{\text{(new)}} $, i.e. if
$|w_j-w_{0,j}| > \mu_j$, then
\begin{align}
  \hat\beta_j^{\text{(new)}} &=
  \argmax_{\beta\in\R}\{(w_j-\sigma_j\mu_j)\beta - U_j(\beta)\} \nonumber\\
  &\approx \argmax_{\beta\in\R}\bigl\{ (w_j-\sigma_j\mu_j) \beta -
  U_j(\hat\beta_j) - U_j'(\hat\beta_j) (\beta-\hat\beta_j) - \frac12
  U_j''(\hat\beta_j) (\beta-\hat\beta_j)^2 \bigl\} \nonumber\\
  &= \hat\beta_j + \frac{w_j-\sigma_j\mu_j -
    U_j'(\hat\beta_j)}{U_j''(\hat\beta_j)}.\label{eq:5}
\end{align}
This approximation differs from the standard quadratic approximation
\citep{friedman2010regularization} by the fact that it still uses the
\emph{exact} thresholding rule from \eqref{eq:16}. To be precise,
given current estimates $\hat\beta$, the standard approximation
updates the $j^{\text{th}}$ coordinate by minimizing the approximate
quadratic cost function
\begin{align*}
  \frac12 U_j''(\hat\beta_j) \beta_j^2 - \bigl[ w_j -
  U'_j(\hat\beta_j) + U_j''(\hat\beta_j) \hat\beta_j\bigr] \beta_j +
  \mu_j|\beta_j|,
\end{align*}
which has the solution
\begin{align}\label{eq:9}
  \hat\beta_j^{\text{(new)}}=
  \begin{cases}
    \hat\beta_j + \frac{w_j - U'_j(\hat\beta_j) - \sigma'_j \mu_j
    }{U_j''(\hat\beta_j) } & \text{ if } \bigl|w_j - U'_j(\hat\beta_j) +
    U_j''(\hat\beta_j) \hat\beta_j\bigr|>\mu_j\\
    0 & \text{ otherwise}
  \end{cases},
\end{align}
where
$\tilde\sigma_j = \sgn(w_j - U'_j(\hat\beta_j) + U_j''(\hat\beta_j)
\hat\beta_j)$.

Hence, compared to the exact coordinate update rule \eqref{eq:16}, the
standard algorithm not only uses a quadratic approximation to the cost
function, but also a linear approximation
\begin{equation*}
  w_{0,j} = U'_j(0) \approx U'_j(\hat\beta_j) - U_j''(\hat\beta_j)
  \hat\beta_j.
\end{equation*}
 
The following result shows that, under certain conditions, the
approximate and exact thresholding will return the same result:
\begin{prop}\label{prop:threshold}
  Let $U(\beta)$ be a smooth convex
    function of a single variable $\beta\in\R$, and let $\hat\beta$ be
    the solution of
  \begin{align*}
    \hat\beta = \argmin_{\beta\in\R} \bigl\{ U(\beta) - w\beta + \mu
    |\beta|\bigr\}, 
  \end{align*}
  with $w\in\R$ and $\mu>0$. Denote $w_0=U'(0)$ and
  $\tilde w_0 = U'(\hat\beta) - U''(\hat\beta) \hat\beta$. Then
  \begin{align*}
    |w-w_0| > \mu \Leftrightarrow |w-\tilde w_0|>\mu.
  \end{align*}
\end{prop}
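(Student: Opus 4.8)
The plan is to show that each of the two inequalities is separately equivalent to the condition $\hat\beta\neq0$, whence they are equivalent to one another. The left-hand inequality is handled by the one-variable soft-thresholding analysis preceding the proposition, which already establishes that $\hat\beta=0$ precisely when $|w-w_0|\leq\mu$; equivalently $|w-w_0|>\mu\Leftrightarrow\hat\beta\neq0$. It therefore remains to prove $|w-\tilde w_0|>\mu\Leftrightarrow\hat\beta\neq0$, and I would organise this around the two mutually exclusive cases $\hat\beta=0$ and $\hat\beta\neq0$.

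In the case $\hat\beta=0$, substituting into the definition of $\tilde w_0$ gives $\tilde w_0=U'(0)-U''(0)\cdot0=U'(0)=w_0$, so $|w-\tilde w_0|=|w-w_0|\leq\mu$. Thus $\hat\beta=0$ forces $|w-\tilde w_0|\leq\mu$, which is the contrapositive of one of the two implications.

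In the case $\hat\beta\neq0$, I would invoke the first-order optimality of the exact solution. With $\sigma=\sgn(w-w_0)$, the soft-thresholding result supplies both the sign relation $\sgn(\hat\beta)=\sigma$ and the stationarity identity $U'(\hat\beta)=w-\sigma\mu$ (since the nonzero solution is $\hat\beta=\hat\beta_0(w-\sigma\mu)$, which by definition satisfies $U'(\hat\beta)=w-\sigma\mu$). Substituting this into $\tilde w_0=U'(\hat\beta)-U''(\hat\beta)\hat\beta$ gives
$$
w-\tilde w_0 = w-\bigl(w-\sigma\mu\bigr)+U''(\hat\beta)\hat\beta = \sigma\mu+U''(\hat\beta)\hat\beta.
$$
Writing $\hat\beta=\sigma|\hat\beta|$ (valid because $\sgn(\hat\beta)=\sigma$), the right-hand side equals $\sigma\bigl(\mu+U''(\hat\beta)|\hat\beta|\bigr)$, so $|w-\tilde w_0|=\mu+U''(\hat\beta)|\hat\beta|$. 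As $|\hat\beta|>0$ and $U''>0$, this exceeds $\mu$.

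Chaining the two cases yields $|w-\tilde w_0|>\mu\Leftrightarrow\hat\beta\neq0\Leftrightarrow|w-w_0|>\mu$. The step requiring the most care is the sign bookkeeping in the nonzero case: one must confirm that $\sgn(\hat\beta)$ coincides with $\sigma=\sgn(w-w_0)$, so that the contributions $\sigma\mu$ and $U''(\hat\beta)\hat\beta$ reinforce rather than partially cancel. It is also worth flagging that the strict inequality genuinely uses $U''(\hat\beta)>0$; this is why the strict convexity of $U$---guaranteed in the regression setting by the $\ell_2$ term absorbed into $U$---should not be dropped.
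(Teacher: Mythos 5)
Your proposal is correct and follows essentially the same route as the paper's own proof: a case split on $\hat\beta=0$ versus $\hat\beta\neq0$ (the former giving $\tilde w_0=w_0$ immediately, the latter using the stationarity identity $U'(\hat\beta)=w-\sigma\mu$ and the sign agreement $\sgn(\hat\beta)=\sigma$ to get $|w-\tilde w_0|=\mu+U''(\hat\beta)|\hat\beta|$). Your closing remark that the strict inequality genuinely requires $U''(\hat\beta)>0$ is a fair point of care that the paper's phrase ``by convexity of $U$'' glosses over, and it is implicitly supplied in the regression setting by the $\ell_2$ term.
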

The proof of this proposition can be found in Appendix
\ref{sec:proof-theorem-app}. Note that in the coordinate descent
algorithms the single-coordinate functions $U$ change from step to
step, that $\tilde w_0$ is calculated on the \emph{current} instead of
the \emph{new} solution, and that, in the quadratic approximation
algorithm, both the current and new solutions are only approximate
minimisers. Hence this result only shows that if all these errors are
sufficiently small, then both thresholding rules will agree.

\section{Numerical experiments}
\label{sec:numer-exper}

I implemented the natural coordinate descent algorithm for logistic
regression in C with a Matlab interface (source code available from
\url{http://tmichoel.github.io/glmnat/}).  The penalised cost function for
$\beta\in\R^p$ in this case is given by
\begin{align*}
  H(\beta) = U(\beta) - w^T\beta + \mu \sum_{j=2}^p |\beta_j|,
\end{align*}
where
\begin{align*}
  U(\beta) = \frac1n\sum_{i=1}^n \log\bigl(1+e^{x_i^T\beta}\bigr)\;,
  \quad  w = \frac1n\sum_{i=1}^n y_i x_i^T,
\end{align*}
and $(x_i\in\R^p,y_i\in\{0,1\})$, $i=1,\dots,n$ are the
observations. Recall from Section \ref{sec:introduction} that
$\beta_1$ is regarded as the (unpenalised) intercept parameter and
therefore a fixed value of one ($x_{i1}=1$) is added to every
observation. As convergence criterion I used
$\max_{j=1,\dots,p}|\hat\beta_j^{(\text{new})} -
\hat\beta_j^{(\text{old})}|<\epsilon$, where $\epsilon>0$ is a fixed
parameter. The difference is calculated at every iteration step when a
single coefficient is updated and the maximum is taken over a full
iteration after all, resp. all active, coefficients have been updated
once.

To test the algorithm I used gene expression levels for 17,814 genes
in 540 breast cancer samples (BRCA dataset) \citep{tcgabreast2012} and
20,531 genes in 266 colon and rectal cancer samples (COAD dataset)
\citep{tcgacolon2012} as predictors for estrogen receptor status
(BRCA) and early-late tumor stage (COAD), respectively (see Appendix
\ref{sec:cancer-genome-atlas} for details, processed data available
from \url{http://tmichoel.github.io/glmnat/}). I compared the
implementation of the natural coordinate descent algorithm against
\texttt{glmnet} (version dated 30 Aug 2013) \citep{qian2013}, a
Fortran-based implementation for Matlab of the coordinate descent
algorithm for penalised regression in generalised linear models
proposed by \citet{friedman2010regularization}, which was found to be
the most efficient in a comparison to various other softwares by the
original authors \citep{friedman2010regularization} as well as in an
independent study \citep{yuan2010comparison}. All analyses were run on
a laptop with 2.7~GHz processor and 8~GB RAM using Matlab v8.2.0.701
(R2013b).

Following \citet{friedman2010regularization}, I considered a geometric
path of regularisation parameters
\begin{equation}\label{eq:11}
  \mu^{(k)}=\frac{\max_{j=2,\dots,p}|w_j|}{m^{\frac{k-1}{m-1}}},
\end{equation}
where $m=100$, and $k=1,\dots,m$, corresponding to the default choice
in \texttt{glmnet}. Note that $\mu^{(1)}$ is the smallest penalty that
yields a solution where only the intercept parameter is non-zero.
Such a path of parameters evenly spaced on log-scale typically
corresponds to models with a linearly increasing number of non-zero
coefficients \citep{friedman2010regularization}.  To compare the
output of two different algorithms over the entire regularisation
path, I considered the maximum relative score difference
\begin{align*}
  \max_{k=1,\dots,m}\frac{H(\bone) - H(\btwo)}{H(\bone)},
\end{align*}
where $\bone$ and $\btwo$ are the coefficient estimates obtained by
the respective algorithms for the $k$th penalty parameter.

A critical issue when comparing algorithm runtimes is to match
convergence threshold settings.  Figure \ref{fig:comp}a shows the
runtimes of the exact natural coordinate descent algorithm (using
eq. \eqref{eq:16}) and its quadratic approximation (using
eq. \eqref{eq:5}), and their maximum relative score difference for a
range of values of the convergence threshold $\epsilon$. The quadratic
approximation algorithm is about twice as fast as the exact algorithm
and, as expected, both return numerically identical results within the
accepted tolerance levels. For subsequent analyses only the quadratic
approximation algorithm was used.  Because \texttt{glmnet} uses a
different convergence criterion than the one used here, I ran the
natural coordinate descent algorithm with a range of values for
$\epsilon$ and calculated the maximum relative score difference over
the entire regularisation path with respect to the output of
\texttt{glmnet} with default settings. Figure \ref{fig:comp}b shows
that there is a dataset-dependent value for $\epsilon$ where this
difference is minimised and that the minimum difference is within the
range observed when running \texttt{glmnet} with randomly permuted
order of predictors. These minimising values
$\epsilon_{\text{BRCA}}=6.3\times 10^{-4}$ and
$\epsilon_{\text{COAD}}=2.0\times 10^{-3}$ were used for the
subsequent comparisons.

First, I compared the natural coordinate descent algorithms with exact
and approximate thresholding rules (cf. eqs. \eqref{eq:16} and
\eqref{eq:9}). For both datasets and all penalty parameter values, no
differences were found between the two rules during the entire course
of the algorithm, indicating that the error terms discussed after
Proposition \ref{prop:threshold} are indeed sufficiently small in
practice. Since there is as yet no analytical proof extending
Proposition \ref{prop:threshold} to the algorithmic setting, the exact
thresholding rule was used for all subsequent analyses.

\begin{figure}
  \centering
  \includegraphics[width=\linewidth]{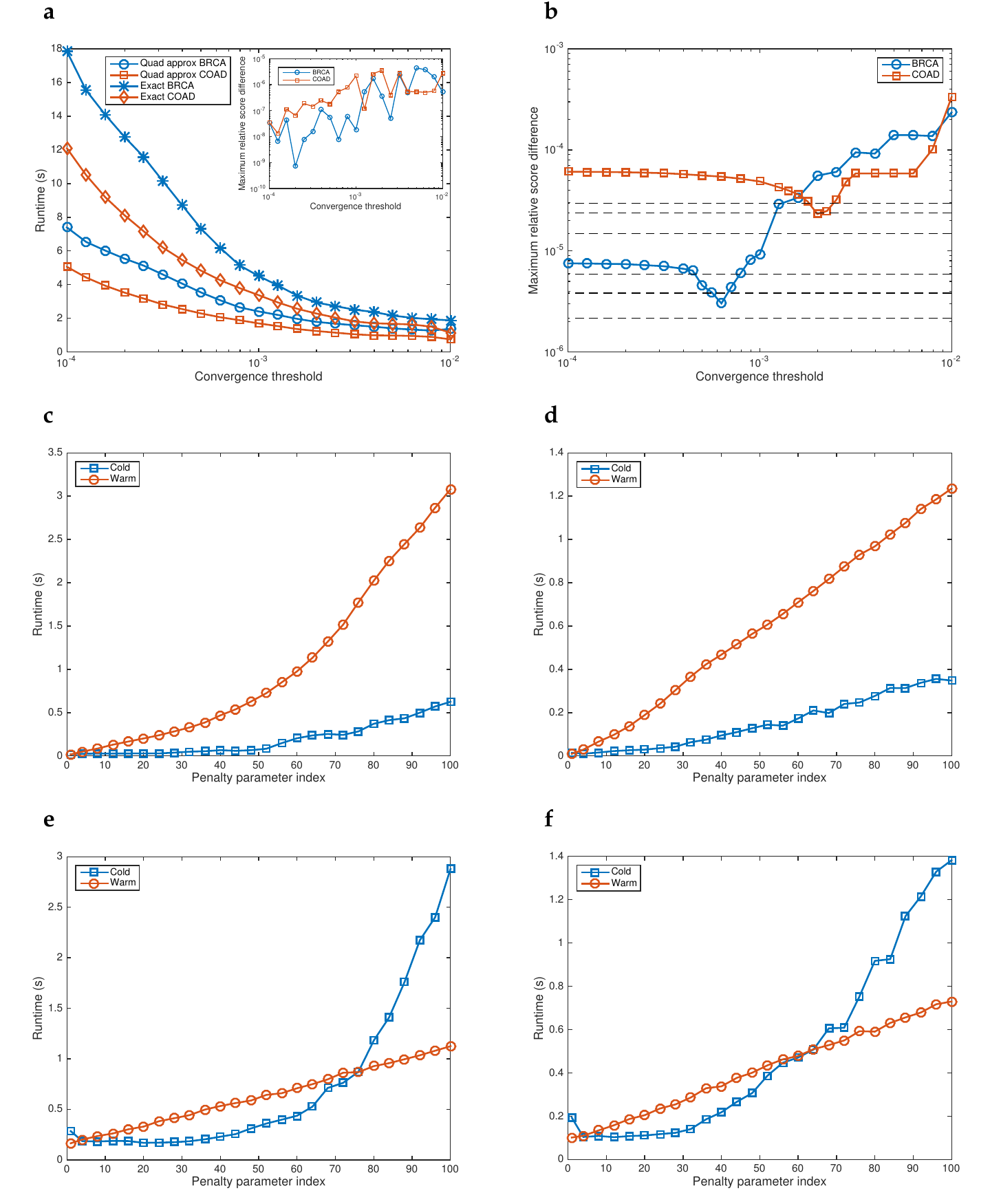}
  \caption{Numerical evaluation of the natural coordinate descent
    algorithm. \textbf{a.} Runtime in seconds of the quadratic
    approximation algorithmn on the BRCA (blue circles) and COAD (red
    squares) dataset and of the exact algorithm on the BRCA (blue
    asterisks) and COAD (red diamonds) dataset vs. convergence
    threshold parameter $\epsilon$. The inset shows the maximum
    relative score difference between both algorithms for the same
    convergence thresholds. \textbf{b.} Maximum relative score
    difference between the natural coordinate descent algorithm and
    \texttt{glmnet} vs. convergence threshold parameter $\epsilon$ for
    the BRCA (blue circles) and COAD (red squares) dataset. The
    horizontal lines indicate the minimum, mean and maximum of the
    relative score difference over 10 comparisons between the original
    \texttt{glmnet} result and \texttt{glmnet} applied to data with
    randomly permuted order of predictors.  \textbf{c,d.} Runtime in
    seconds of the natural coordinate descent algorithm with cold
    (blue squares) and warm (red circles) starts on the BRCA
    (\textbf{c}) and COAD (\textbf{d}) dataset vs. index $k$ of the
    penalty parameter vector. \textbf{e,f.} Runtime in seconds of
    \texttt{glmnet} with cold (blue squares) and warm (red circles)
    starts on the BRCA (\textbf{e}) and COAD (\textbf{f}) dataset
    vs. index $k$ of the penalty parameter vector. See main text for
    details.}
  \label{fig:comp}
\end{figure}

Next, I compared the natural coordinate descent algorithm to
\texttt{glmnet} considering both ``cold'' and ``warm'' starts. For the
cold starts, the solution for the $k^{\text{th}}$ penalty parameter
value, $\hat\beta(\mu^{(k)})$, was calculated starting from the
initial vector $\hat\beta=0$. For the warm starts,
$\hat\beta(\mu^{(k)})$ was calculated along the path of penalty
parameters $\mu^{(1)}, \mu^{(2)}, \dots, \mu^{(k)}$, each time using
$\hat\beta(\mu^{(l)})$ as the initial vector for the calculation of
$\hat\beta(\mu^{(l+1)})$. This scenario was designed to answer the
question: If a solution $\hat\beta(\mu)$ is sought for some fixed
value of $\mu<\mu^{(1)}$, is it best to run the coordinate descent
algorithm once starting from the initial vector $\hat\beta=0$ (cold
start), or to run the coordinate descent algorithm multiple times
along a regularization path, each time with an initial vector that
should be close to the next solution (warm start)? Clearly, if a
solution is needed for all values of a regularization path it is
always better to run through the path once using warm starts at each
step.

For \texttt{glmnet}, there is a clear advantage to using warm starts
and, as also observed by \citet{friedman2010regularization}, for
smaller values of $\mu$, it can be faster to compute along a
regularisation path down to $\mu$ than to compute the solution at
$\mu$ directly (Figure \ref{fig:comp}e,f). In contrast, the natural
coordinate descent algorithm is much less sensitive to the use of warm
starts (i.e. to the choice of initial vector for $\hat\beta$) and it
is considerably faster than \texttt{glmnet} when calculating solutions
at single penalty parameter values (Figure \ref{fig:comp}c,d). 

To investigate whether this qualitative difference between both
algorithms is a general property, the following process was repeated
1,000 times: a gene was randomly selected from the BRCA dataset, a
binary response variable was defined from the sign of its expression
level, and penalised logistic regression with penalty parameter
$\mu^{(90)}$ (cf. eq. \eqref{eq:11}) was performed using 5,000
randomly selected genes as predictors, using both cold start (with
initial vector $\hat\beta=0)$ and warm start (along the regularization
path $\mu^{(1)},\dots,\mu^{(90)}$); the response gene was constrained
to have at least $30\%$ samples of either sign and the predictor genes
were constrained to not contain the response gene.  This scheme
ensured that datasets with sufficient variability in the correlation
structure among the predictor variables and between the predictor and
the response variables were represented among the test cases. As
expected, total runtime correlated well with the size of the model,
defined here as the number of predictors with $|\beta_j|>10^{-3}$,
more strongly so for the natural coordinate descent algorithm
(Pearson's $\rho_{\text{cold}}=0.92$, $\rho_{\text{warm}}=0.91$) than
for \texttt{glmnet} ($\rho_{\text{cold}}=0.79$,
$\rho_{\text{warm}}=0.84$).  Consistent with these high linear
correlations the difference in speed (runtime$^{-1}$) between cold and
warm start was inversely proportional to model size (Spearman's
$\rho=-0.90$; Figure \ref{fig:speed}a). Furthermore, cold start
outperformed warm start ($v_{\text{cold}}>v_{\text{warm}}$) in all
1,000 datasets. For \texttt{glmnet} the opposite was true: warm start
always outperformed cold start. However the speed difference
($v_{\text{cold}}-v_{\text{warm}}$) did not correlate as strongly with
model size (Spearman's $\rho=0.53$; Figure \ref{fig:speed}b; note that
the opposite sign of the correlation coefficient is merely due to
the opposite sign of the speed differences).

This consistent qualitative difference between both algorithms with
respect to the choice of initial vector was unexpected in view of the
results in Section \ref{sec:quadr-appr-with}. Upon closer inspection,
it was revealed that the natural coordinate descent algorithm uses a
scheme whereby the parameters of the quadratic approximation for
coordinate $j$ (i.e., $U'_j(\hat\beta_j)$ and $U''_j(\hat\beta_j)$)
are updated whenever there is a change in $\hat\beta_{j'}$ for some
$j'\neq j$. In contrast, \texttt{glmnet} uses two separate loops,
called ``middle'' and ``inner'' loop in
\citep{friedman2010regularization}. In the middle loop, the quadratic
approximation to $U$ at the current solution $\hat\beta^*$ is
calculated, i.e.
\begin{equation}\label{eq:10}
  U(\beta) \approx U(\hat\beta^*) + \sum_{j=1}^p G_j
  (\beta_j-\hat\beta^*_j) + \sum_{j,k=1}^p H_{jk}
  (\beta_j-\hat\beta^*_j) (\beta_k-\hat\beta^*_k),
\end{equation}
where
\begin{align*}
  G_j = \frac{\partial U}{\partial \beta_j}\biggr|_{\hat\beta^*} \quad
  \text{and} \quad
  H_{jk} = \frac{\partial^2 U}{\partial \beta_j \partial
  \beta_k}\biggr|_{\hat\beta^*}.
\end{align*}
In the inner loop, a penalised least squares coordinate descent
algorithm is run until convergence using the approximation
\eqref{eq:10}, i.e. keeping the values of $G$ and $H$ fixed. A poor
choice of initial vector will therefore result in values of $G$ and
$H$ that are far from optimal, and running a coordinate descent
algorithm until convergence without updating these values would
therefore result in a loss of efficiency. It is therefore plausible
that the continuous updating of the quadratic approximation parameters
in the natural coordinate descent algorithm explains its robustness
with respect to the choice of initial vector.

\begin{figure}
  \centering
  \includegraphics[width=\linewidth]{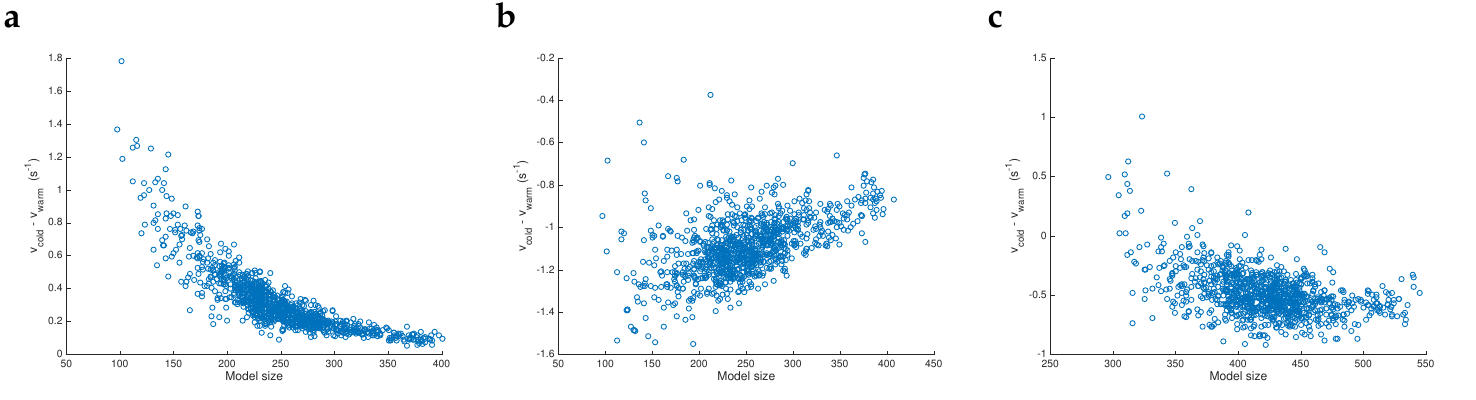}
  \caption{Difference in speed (runtime$^{-1}$) vs. model size between
    cold start ($v_{\text{cold}}$) and warm start ($v_{\text{warm}}$)
    on sub-sampled datasets for the natural coordinate descent
    algorithm (\textbf{a}), logistic regression using \texttt{glmnet}
    (\textbf{b}) and linear regression using \texttt{glmnet}
    (\textbf{c}). See main text for details.}
  \label{fig:speed}
\end{figure}

If this reasoning is correct, then the warm-start advantage should not
be present if \texttt{glmnet} is used to solve penalised least squares
problems, since in this case there is no middle loop to be
performed. To test this hypothesis, I performed penalised linear
regression (lasso) with \texttt{glmnet} on the same sub-sampled
datasets, using the same binary response variable and the same penalty
parameter values as in the previous logistic regressions. Although the
speed difference $v_{\text{cold}}-v_{\text{warm}}$ between cold and
warm start now indeed followed a similar pattern as the natural
coordinate descent algorithm (Spearman's $\rho=-0.43$; Figure
\ref{fig:speed}c), in all but 20 datasets, warm start still
outperformed cold start. This suggests that not updating the quadratic
approximation at every step during logistic regression in
\texttt{glmnet} may explain in part why it is more sensitive to the
choice of initial vector, but additional, undocumented optimizations
of the code must be in place to explain its warm-start advantage.

\section{Conclusions}
\label{sec:conclusions}

The popularity of $\ell_1$-penalised regression as a variable
selection technique owes a great deal to the availability of highly
efficient coordinate descent algorithms. For generalised linear
models, the best existing algorithm uses a quadratic least squares
approximation where the coordinate update step can be solved
analytically as a linear soft-thresholding operation. This analytic
solution has been understood primarily as a consequence of the
quadratic nature of the problem. Here it has been shown however that
in the dual picture where the penalised optimisation problem is
expressed in terms of its Legendre transform, this soft-thresholding
mechanism is generic and a direct consequence of the presence of an
$\ell_1$-penalty term. Incorporating this analytic result in a
standard coordinate descent algorithm leads to a method that is not
only theoretically attractive and easy to implement, but also appears
to offer practical advantages compared to the existing implementations
of the quadratic-approximation algorithm. In particular it is more
robust to the choice of starting vector and therefore considerably
more efficient when it is cold-started, i.e.  when a solution is
computed at set values of the $\ell_1$-penalty parameter as opposed to
along a regularisation path of descending $\ell_1$-penalties.  This
can be exploited for instance in situations where prior knowledge or
other constraints dictate the choice of $\ell_1$-penalty parameter or
in data-intensive problems where distributing the computations for
sweeping the $\ell_1$-penalty parameter space over multiple processors
can lead to significant gains in computing time. Future work will
focus on developing such parallellized implementations of the natural
coordinate descent algorithm and providing implementations for
additional commonly used generalised linear models.

\appendix

\section{Technical proofs}

\subsection{Proof of Theorem \ref{thm:main}}
\label{sec:proof-theorem-app}

With the notations introduced in Section \ref{sec:introduction}, let
$F(\beta) = U(\beta)-w^T\beta$ and
$G(\beta)=\sum_{j=1}^p\mu_j|\beta_j|$.  $F$ and $G$ are convex
functions on $\R^p$ which satisfy Fenchel's duality theorem
\citep{rockafellar1997convex}
\begin{equation}\label{eq:4}
  \min_{\beta\in\R^p} \bigl\{ F(\beta) + G(\beta)\bigr\}
  = \max_{u\in\R^p} \bigl\{ - F^*(u) - G^*(-u)\bigr\},
\end{equation}
where $F^*$ and $G^*$ are the Legendre transforms of $F$ and $G$
respectively.  We have
$F^*(u)=\max_{\beta\in\R^p}\{u^T\beta-F(\beta)\} =
\max_{\beta\in\R^p}\{(u+w)^T\beta-U(\beta)\} = L(u+w)$
and
$G^*(u) = \max_{\beta\in\R^p}\{u^T\beta-G(\beta)\} = \sum_{j=1}^p
\max_{b\in\R} \{u_j b - \mu_j|b|\}$.
If $|u_j|\leq\mu_j$ then the $j^{\text{th}}$ term in this sum is $0$,
otherwise it is $\infty$, i.e.  $G^*(u)=0$ if $|u|\lev\mu$ and
$G^*(u)=\infty$ otherwise. It follows that
\begin{equation}\label{eq:6}
  \min_{\beta\in\R^p} H(\beta) = -\min_{\{u\in\R^p\colon
    |u-w|\lev\mu\}} L(u) = - L(\hat u) = \min_{\beta\in\R^p} \bigl\{
  U(\beta) - \hat u^T\beta\bigr\},
\end{equation}
where $\hat u =\argmin_{\{u\in\R^p\colon |u-w|\lev\mu\}} L(u)$.
Denoting $\Se=\{u\in\R^p\colon |u-w|\lev\mu\}$, the minimiser $\hat u$
must satisfy the optimality conditions \citep[\S
4.2.3]{boyd2004convex}: $\hat u\in \Se$ and
\begin{equation}\label{eq:8}
  (v-\hat u)^T \grad L(\hat u) \geq 0 \quad \text{for all } v\in \Se.
\end{equation}
For any index $j$, choose $v_j\in \Se_j=\{u\in\R\colon
|w_j-u|\leq\mu_j\}$ arbitrary and set $v_k=\hat u_k$ for $k\neq
j$. Then $v\in \Se$ and by eq. \eqref{eq:8},
\begin{equation}\label{eq:15}
  (v_j-\hat u_j) \frac{\partial L}{\partial u_j} (\hat u)\geq 0.
\end{equation}
Assume $\frac{\partial L}{\partial u_j} (\hat u)\neq 0$ and
$\hat u_j\neq w_j-\sigma_j\mu_j$, where
$\sigma_j=\sgn(\frac{\partial L}{\partial u_j} (\hat u))$. Then there
exists $\epsilon>0$ such that
$v_j=\hat u_j-\epsilon\sigma_j\in \Se_j$, but this contradicts
eq. \eqref{eq:15}. By Lemma \ref{lem:legendre} below, if
$\hat\beta_0 = \argmin_{\beta\in\R^p} \bigl\{ U(\beta) - \hat
u^T\beta\bigr\}$,
then $\hat\beta_0 = \grad L(\hat u)$. Hence we have shown that
\begin{align*}
  \hat\beta_{0,j} \neq 0 \Leftrightarrow \hat u_j= w_j-\sgn(\hat\beta_{0,j})\mu_j.
\end{align*}
Denote $I=\{j\colon \hat\beta_{0,j} \neq 0\}$. We find
\begin{align*}
  \hat u^T\hat\beta_0 &= \sum_{j\in I} \hat u_j \hat \beta_{0,j} =
  \sum_{j\in I} \bigl[w_j-\sgn(\hat\beta_{0,j})\mu_j\bigr ] \hat
  \beta_{0,j} = w^T\hat\beta_0 - \sum_{j=1}^p \mu_j |\hat\beta_{0,j}|,
\end{align*}
and hence by eq. \eqref{eq:6},
\begin{align*}
  \min_{\beta\in\R^p} H(\beta) &=   U(\hat\beta_0) - \hat
  u^T\hat\beta_0 = H(\hat\beta_0),
\end{align*}
i.e. $\hat\beta_0$ is also the unique minimiser of the penalised cost
function $H$. This concludes the proof of Theorem \ref{thm:main}.

\qed

\begin{lem}\label{lem:legendre}
  For all $w\in\R^p$, we have 
  \begin{align*}
    \hat\beta_0(w) = \argmax_{\beta\in\R^p} \Bigl\{ w^T\beta -  U(\beta)
    \Bigr\} = \grad L(w).
  \end{align*}
\end{lem}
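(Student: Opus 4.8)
The plan is to read the statement off from the theory of Legendre--Fenchel conjugation: $L$ is by definition the conjugate $U^*$, and the claimed identity $\grad L = \hat\beta_0$ is the classical fact that the gradient of the conjugate of a smooth, strictly convex function is the inverse of its gradient map. I would organise the argument into three short steps.

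First I would establish existence and uniqueness of the maximiser defining $L(w)$. Because $U$ is smooth and convex --- and strictly convex, since the $\ell_2$ term guarantees a nonsingular Hessian --- the map $\beta\mapsto w^T\beta - U(\beta)$ is strictly concave and superlinearly decreasing at infinity, so the maximiser $\hat\beta_0(w)$ exists and is unique. Setting the gradient of the objective to zero yields the stationarity condition $\grad U(\hat\beta_0(w)) = w$, so that $\hat\beta_0 = (\grad U)^{-1}$, the gradient map being injective by strict convexity.

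Second I would verify that $L$ is differentiable at every $w$ and identify its gradient. The cleanest route is the subgradient characterisation of conjugation: $\beta\in\partial L(w)$ if and only if $w\in\partial U(\beta)$, which by smoothness of $U$ means $\grad U(\beta) = w$, i.e. $\beta = \hat\beta_0(w)$. Hence $\partial L(w) = \{\hat\beta_0(w)\}$ is a singleton, which is exactly the condition for $L$ to be differentiable with $\grad L(w) = \hat\beta_0(w)$. Alternatively I would substitute $L(w) = w^T\hat\beta_0(w) - U(\hat\beta_0(w))$ and differentiate directly: the two terms carrying the Jacobian $\partial\hat\beta_0/\partial w$ cancel by the stationarity condition, leaving $\grad L(w) = \hat\beta_0(w)$, which is the usual envelope-theorem computation.

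The main obstacle is precisely the differentiability of $L$: the gradient on the right-hand side of the statement only makes sense once one knows $L$ is $C^1$, and this is where strict convexity and smoothness of $U$ are genuinely used. Once the maximiser is shown to be unique and the subdifferential of $L$ is shown to reduce to a single point, the equality is immediate; the remaining manipulations are the standard conjugacy computations and require no further ideas.
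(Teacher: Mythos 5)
Your proof is correct, but it is organised differently from the paper's. The paper argues via Fenchel's inequality: writing $L(\hat w)=\hat w^T\hat\beta-U(\hat\beta)$ at the maximiser $\hat\beta$, the inequality $v^T\hat\beta\leq U(\hat\beta)+L(v)$ shows that $\hat w$ maximises $w\mapsto\hat\beta^Tw-L(w)$, and the identity $\hat\beta=\grad L(\hat w)$ then follows from the first-order condition, with the differentiability of $L$ simply \emph{asserted} (``$U$ is differentiable and hence so is $L$'', citing Boyd--Vandenberghe). You instead \emph{prove} the differentiability of $L$ via the conjugate subgradient inversion rule $\beta\in\partial L(w)\Leftrightarrow w\in\partial U(\beta)$: smoothness plus strict convexity of $U$ forces $\partial L(w)=\{\hat\beta_0(w)\}$ to be a singleton, which delivers differentiability and the gradient formula simultaneously. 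This is arguably a strengthening of the paper's argument, since the bare implication ``$U$ differentiable $\Rightarrow L$ differentiable'' is false without strict convexity (take $U$ smooth, convex and constant on an interval; the conjugate then has a kink at the corresponding slope, even where the maximiser is attained), and your proof makes explicit that strict convexity --- supplied in the paper's setting by the $\ell_2$ term, as you correctly note --- is what collapses the subdifferential to a point; this same hypothesis is in any case needed for the uniqueness of $\hat\beta_0(w)$ that the lemma's notation presupposes. Both arguments share the standing assumption that the maximum defining $L(w)$ is attained for every $w$, which the paper leaves implicit and you justify by coercivity; note that your coercivity claim really uses the \emph{strong} convexity coming from the quadratic $\lambda\|\beta\|_2^2$ term, not strict convexity alone (e.g.\ $U(\beta)=\sqrt{1+\beta^2}$ is strictly convex yet the supremum is unattained for $|w|\geq 1$), so it is worth stating that dependence precisely. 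Your envelope-theorem alternative is also sound and is essentially the stationarity computation the paper performs, seen from the other side of the duality.
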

\begin{proof}
  For a given $\hat w\in\R^p$, let
\begin{align*}
  \hat\beta = \argmax_{\beta\in\R^p} \Bigl\{ \hat w^T\beta -  U(\beta)
  \Bigr\},
\end{align*}
or
\begin{align*}
  L(\hat w) = \max_{\beta\in\R^p} \Bigl\{ \hat w^T\beta -  U(\beta)  \Bigr\} =
  \hat w^T\hat\beta -  U(\hat\beta).  
\end{align*}
From Fenchel's inequality ($v^T\beta \leq U(\beta)+L(v)$ for all
$v,\beta\in\R^p$, cf. \citep[\S 3.3.2]{boyd2004convex}) it follows that
\begin{align*}
  v^T\hat\beta - L(v) \leq U(\hat\beta) = \hat w^T\hat\beta - L(\hat w),
\end{align*}
i.e.
\begin{align*}
  \hat w = \argmax_{w\in\R^p} \Bigl\{ \hat\beta^T w - L(w)\Bigr\}.
\end{align*}
By assumption $U$ is differentiable and hence so is $L$. It follows
that (see \citep[\S 3.3.2]{boyd2004convex})
\begin{align*}
  \hat \beta = \grad L(\hat w).
\end{align*}
\end{proof}

\subsection{Proof of Proposition \ref{prop:threshold}}
\label{sec:proof-prop}

First, assume $|w-w_0|\leq \mu$. By Theorem \ref{thm:main} we have
$\hat\beta=0$, and hence $\tilde w_0=w_0$ and
$|w-\tilde w_0|=|w-w_0|\leq\mu$. This establishes the direction
$|w-\tilde w_0|> \mu \Rightarrow |w-w_0|>\mu$. If $|w-w_0|> \mu$, then
again by Theorem \ref{thm:main} and using the notation from Section
\ref{sec:natur-coord-desc},
\begin{align*}
  \hat\beta = \hat\beta_0(w-\sigma\mu) = \argmin_{\beta\in\R}\bigl\{ U(\beta)
  - (w-\sigma\mu) \beta\}.
\end{align*}
Hence $U'(\hat\beta) = w-\sigma\mu$ and $\tilde w_0 =
w-\sigma\mu-U''(\hat\beta) \hat\beta$. If $\sigma=1$, $\hat\beta>0$
and, by convexity of U,
\begin{align*}
  w-\tilde w_0 = \mu + U''(\hat\beta) \hat\beta > \mu.
\end{align*}
Similarly, if $\sigma=-1$, we have $w-\tilde w_0<-\mu$. This
establishes the direction
$|w- w_0|> \mu \Rightarrow |w-\tilde w_0|> \mu$.

\qed

\section{The Cancer Genome Atlas data processing details}
\label{sec:cancer-genome-atlas}

\subsection{Breast cancer data (BRCA)}
\label{sec:breast-cancer-data}

Processed data files were obtained from
\url{https://tcga-data.nci.nih.gov/docs/publications/brca_2012/}:
\begin{itemize}
\item Normalised expression data for 17,814 genes in 547 breast cancer
  samples (file \texttt{BRCA.exp.547.med.txt}).
\item Clinical data for 850 breast cancer samples (file
  \texttt{BRCA\_Clinical.tar.gz}).
\end{itemize}
540 samples common to both files had an estrogen receptor status
reported as positive or negative in the clinical data.  Estrogen
receptor status was used as the binary response data $Y\in\R^n$,
$n=540$, and gene expression for all genes ($+$ one constant
predictor) was used as predictor data $X\in\R^{n\times p}$,
$p=17,815$.

\subsection{Colon and rectal cancer data (COAD)}
\label{sec:colon-cancer-data}

Processed data files were obtained from
\url{https://tcga-data.nci.nih.gov/docs/publications/coadread_2012/}:
\begin{itemize}
\item Normalised expression data for 20,531 genes in 270 colon and
  rectal cancer samples (file \texttt{crc\_270\_gene\_rpkm\_datac.txt}).
\item Clinical data for 276 colon and rectal cancer samples (file
  \texttt{crc\_clinical\_sheet.txt}).
\end{itemize}
266 samples common to both files had a tumor stage (from I to IV)
reported in the clinical data.  Early (I--II) and late (III--IV)
stages were grouped and used as the binary response data $Y\in\R^n$,
$n=266$, and gene expression for all genes ($+$ one constant
predictor) was used as predictor data $X\in\R^{n\times p}$,
$p=20,532$.

\section{Natural coordinate descent algorithm pseudocode}
\label{sec:code}

\begin{algorithm}[h!]
  \caption{Main loop}
  \label{alg:ncds}
  \begin{algorithmic}
    \State Initialise $\hat\beta=0$.
    \State \Call{CompleteCycle}{}
    \While{not converged}
    \While{not converged}
    \State\Call{ActiveSetCycle}{}
    \EndWhile
    \State\Call{CompleteCycle}{}
    \EndWhile
 \end{algorithmic}
\end{algorithm}
\begin{algorithm}[h!]
  \caption{Complete coordinate descent cycle}
  \begin{algorithmic}
    \Procedure{CompleteCycle}{}
    \For{$j=1,\dots,p$}
    \State\Call{CoordinateUpdate}{$j$}
    \EndFor
    \EndProcedure
\end{algorithmic}
\end{algorithm}
\begin{algorithm}[h!]
  \caption{Active set coordinate descent cycle}
  \begin{algorithmic}
    \Procedure{ActiveSetCycle}{}
    \For{$j=1,\dots,p$}
    \If{$\hat\beta_j\neq 0$}
    \State\Call{CoordinateUpdate}{j}
    \EndIf
    \EndFor
    \EndProcedure
\end{algorithmic}
\end{algorithm}
\begin{algorithm}[h!]
  \caption{Exact coordinate update}
  \begin{algorithmic}
    \Procedure{CoordinateUpdate}{j}
    \State Update $w_{0,j} = U_j'(0)$.
    \If {$|w_j - w_{0,j}| > \mu_j$}
    \State $\hat\beta_j \leftarrow$ zero of $U_j'(\cdot) - w_j + \sgn(w_j-w_{0,j})\mu_j$.
    \Else
    \State $\hat\beta_j\leftarrow 0$.
    \EndIf
    \EndProcedure
  \end{algorithmic}
\end{algorithm}
\begin{algorithm}[h!]
  \caption{Linear approximation for coordinate update}
  \label{alg:ncds-2}
  \begin{algorithmic}
    \Procedure{CoordinateUpdateLinear}{j}
    \State Update $w_{0,j} = U_j'(0)$.
    \If {$|w_j - w_{0,j}| > \mu_j$}
    \State $\hat\beta_j \leftarrow \hat\beta_j + \frac{w_j-\sgn(w_j-w_{0,j})\mu_j -
    U_j'(\hat\beta_j)}{U_j''(\hat\beta_j)}$
    \Else
    \State $\hat\beta_j\leftarrow 0$.
    \EndIf
    \EndProcedure
\end{algorithmic}
\end{algorithm}

\newpage


\end{document}